\documentclass[aps,pre,onecolumn,superscriptaddress,amsmath,amssymb]{revtex4-2}
\usepackage{graphicx}%
\usepackage{amsthm,amsfonts}%
\usepackage{mathrsfs}%
\usepackage{xcolor}%
\usepackage{booktabs}%
\usepackage{url}

\usepackage{dsfont}
\usepackage{orcidlink}

\newcommand{\matr}[1]{{\boldsymbol{#1}}}
\renewcommand{\vec}[1]{{\boldsymbol{#1}}}
\newcommand{\EE}{\mathcal{E}}
\newcommand{\VV}{\mathcal{V}}

\newcommand{\OO}{\mathcal{F}}
\newcommand{\lagrangian}{\mathcal{L}}
\newcommand{\nn}{N}
\newcommand{\nedge}{{N_\EE}}

\newcommand{\eye}{\mathds{1}}

\newtheorem{cor}{Corollary}
\newtheorem{lem}{Lemma}
\newtheorem{thm}{Theorem}

\begin{document}

\title{Linear Approximations Provide Rigorous Conditions for Phase Locking in Oscillator Networks and Power Grids}

\newcommand{\FZJ}{Institute of Climate and Energy Systems: Energy Systems Engineering (ICE-1), Forschungszentrum J\"ulich, J\"ulich, Germany}
\newcommand{\IPB}{Center of Excellence for Study of Complex Systems, Institute of Physics Belgrade, University of Belgrade, Belgrade, Serbia}
\newcommand{\RWTH}{RWTH Aachen University, Aachen, Germany}
\newcommand{\UoC}{Institute for Theoretical Physics, University of Cologne, K\"oln, Germany}

\author{Iva Ba\v{c}i\'{c}\,\orcidlink{0000-0003-2987-5065}}
\affiliation{\FZJ}
\affiliation{\IPB}

\author{Carsten Hartmann\,\orcidlink{0009-0007-5067-589X}}
\affiliation{\FZJ}
\affiliation{\RWTH}

\author{Philipp C.~B\"ottcher\,\orcidlink{0000-0002-3240-0442}}
\affiliation{\FZJ}

\author{David Gross\,\orcidlink{0000-0001-8888-3028}}
\affiliation{\UoC}

\author{Andrea Benigni\,\orcidlink{0000-0002-2475-7003}}
\affiliation{\FZJ}
\affiliation{\RWTH}

\author{Dirk Witthaut\,\orcidlink{0000-0002-3623-5341}}
\affiliation{\FZJ}
\affiliation{\UoC}

\begin{abstract}
Synchronization is a fundamental collective phenomenon in many natural and engineered networks. The stable operation of oscillator networks, such as electric power grids, requires phase locking with phase differences across edges below a given threshold. A central question is whether a given network supports such a state. In this contribution, we extend a convex optimization formulation of the nonlinear fixed point equations of the Kuramoto model, that also describe lossless power grids. Based on this formulation, we derive rigorous and tight bounds on the error introduced by an arbitrary approximation of the original model, in particular linear approximations. The bounds yield edge-wise regions of trust around the approximation, within which the true nonlinear solution must lie. They further provide sufficient and necessary conditions for phase locking under prescribed phase difference limits. Numerical experiments show that the bounds are tight over a broad range of parameters.
\end{abstract}

\keywords{Synchronization, Power Grids, Linearization, Existence}

\maketitle

\section{Introduction}

Synchronization is a fundamental collective phenomenon in complex networks and plays an important role in applications ranging from coupled oscillators and distributed control to infrastructure systems~\cite{arenas2008synchronization,dorfler2014survey}. A central theoretical question is whether a given network admits a phase-locked state and, more stringently, whether such a state satisfies prescribed bounds on the phase differences between adjacent nodes. These questions depend jointly on the network topology, the coupling strengths, and the heterogeneity of the nodal inputs.

Electric power grids provide a particularly important example.
Stable steady-state operation requires generators and grid-forming devices to remain phase locked, thereby enabling a stationary transfer of real power across the network~\cite{witthaut2022collective}. Loss of phase locking between parts of the grid destroys steady power transmission and can contribute to large-scale system failures~\cite{titz2026cascading}. At the same time, increasing utilization of transmission infrastructure raises the loading of individual corridors and reduces operational margins. It is therefore of practical importance to determine whether a network supports a \emph{phase-cohesive} state in which all phase differences remain below a prescribed threshold.

The existence of phase-cohesive states has been studied extensively in oscillator networks and power systems. Effective sufficient conditions were derived in Refs.~\cite{dorfler2013synchronization,jafarpour2018synchronization}, while the role of cycles, loop flows, and winding numbers in meshed networks was analyzed in Refs.~\cite{manik2017cycle,delabays2016multistability,jafarpour2022flow,cui2026cyclesolvability}. Related approaches exploit connections to nonlinear consensus~\cite{jadbabaie2004stability,dorfler2011critical} or formulate the equilibrium equations through convex or energy-based optimization problems~\cite{DvijothamLowChertkov,hartmann2024synchronized}. 
In our recent work~\cite{hartmann2024synchronized}, we introduced a convex formulation of the nonlinear fixed-point equations in terms of edge flows and derived rigorous bounds on the error of the linear (DC) approximation, in which the sine term is replaced by its argument. This approximation is commonly used by transmission system operators to speed up simulations. Here, we build on this framework and derive \emph{edge-wise a posteriori error bounds} for arbitrary feasible approximations, i.e. bounds obtained from the approximation itself, without knowledge of the exact solution. These bounds define a rigorous region of trust around the DC solution and lead directly to new sufficient and necessary conditions for the existence of phase-cohesive states. The resulting conditions explicitly account for cycle flows and network topology and apply equally to phase-locked states of oscillator networks and to lossless real-power flow in electric grids.

\section{Background and Notation}

We consider networks of coupled phase oscillators described by the Kuramoto model~\cite{kuramoto1975,arenas2008synchronization},
\begin{align}
    \dot \theta_n
    =
    \omega_n
    -
    \sum_{m \in \VV}
    K_{nm}\sin(\theta_n-\theta_m),
    \label{eq:kuramoto}
\end{align}
where $\theta_n$ denotes the phase of node $n$, $\omega_n$ its natural frequency, and $K_{nm}=K_{mn}\geq 0$ the coupling strength between nodes $n$ and $m$. 
We assume that the underlying network is connected and denote its node and edge sets by $\VV$ and $\EE$, respectively. 
A phase-locked state is characterized by all oscillators rotating at a common frequency $\Omega = \frac{1}{|\VV|}    \sum_{m\in\VV} \omega_m$ while their phase differences remain constant. Such a state corresponds to a fixed point in a co-rotating frame given by the algebraic equations
\begin{align}
    {\omega}_n - \Omega
    =
    \sum_{m\in\VV}
    K_{nm}\sin(\theta_n-\theta_m),
    \qquad n\in\VV.
    \label{eq:fixed-point}
\end{align}
The same equations describe the phase-locked states of the second-order Kuramoto model~\cite{tanaka1997first}, where the inertial and damping terms reduce to a constant in the co-rotating frame and are absorbed into the effective injection $p_n$.

Equations~\eqref{eq:fixed-point} arise in a variety of networked dynamical systems, such as nonlinear consensus protocols~\cite{jadbabaie2004stability}, vehicle coordination~\cite{klein2008integration}, and clock synchronization in distributed computer networks~\cite{simeone2008distributed}. They also describe the steady-state real-power balance in lossless AC power grids~\cite{dorfler2014survey,witthaut2022collective},
\begin{align}
    p_n
    =
    \sum_{m\in\VV}
    K_{nm}\sin(\theta_n-\theta_m),
    \qquad n\in\VV.
    \label{eq:realpower-intro}
\end{align}
Here, $\theta_n$ denotes the voltage phase angle and $p_n$ the real power injection at bus $n$, while the effective coupling strength $K_ {nm}$ is determined by the parameters of the branch $(n,m)$ and the voltage magnitude at the terminal buses. Power balance requires $\sum_{n\in\VV}p_n=0$. Stable power grid operation generally requires that fixed points are \emph{phase-cohesive}~\cite{dorfler2013synchronization} in the sense that 
\begin{align}
    |\theta_n-\theta_m| \leq \gamma
    \qquad
    \text{for all } (n,m)\in\EE,
    \label{eq:phase-cohesiveness}
\end{align}
with a prescribed threshold $\gamma<\pi/2$. In the following, we adopt the terminology of power-system analysis because it provides a particularly useful interpretation in terms of power injections and branch flows. All results, however, apply equally to phase-locked states of oscillator networks upon identifying
 $   p_n \equiv \omega_n-\Omega .$

A particularly useful perspective is obtained by shifting the description from nodes to edges. For every oriented edge $e=(n,m)$, we define the flow
\begin{align}
    f_e = K_e\sin(\theta_n-\theta_m).
    \label{eq:def-flow-background}
\end{align}
The nonlinear fixed-point problem can then be separated into two conceptually distinct requirements. First, the edge flows must
be conserved  at every node,
\begin{align}
    \vec p = \matr E \vec f,
    \label{eq:power-balance}
\end{align}
where $\matr E$ is the signed node-edge incidence matrix~\cite{newman2018networks}, with components $E_{n,e} =+1$ if node $n$ is the from end of branch $e$, $E_{n,e} =-1$ if it is the to end of branch $e$, and $0$ otherwise. Second, the flows must be compatible with a set of nodal phases through Eq.~\eqref{eq:def-flow-background}. This edge-based formulation is particularly advantageous for theoretical analysis: the linear conservation law determines an affine space of admissible flows, while all effects of the nonlinear coupling are encoded in the additional compatibility condition. The distinction between these two requirements will be central to the analysis below.

A natural approximation to the nonlinear equilibrium equations is obtained by linearizing the sine for small phase differences. In power-system analysis, the resulting model is known as the DC power-flow approximation~\cite{wood2013power}. It is heavily used in time critical applications, such as in congestion management and redispatch, or in N-1 contingency screening, where thousands of scenarios must be evaluated in very short timescales~\cite{purchala2005usefulness,westerbeck2025accelerated}. 

The branch flows thus read
\begin{align}
    &\mbox{real power flow:} &
    \vec f &= \matr K \sin\left( \matr E^\top \vec \theta \right), 
    \label{eq:RP_flow}
    \\
    &\mbox{DC approximation :} &
    \vec f^{\rm DC} &= \matr K  \matr E^\top \vec \theta^{\rm DC} \label{eq:DC_flow},
\end{align}
while linear balance condition $\vec p = \matr E \vec f$ must be satisfied in all cases.

\section{Algebraic graph theory of network flows}
\label{sec:graph-theory}

We briefly summarize the graph-theoretic tools required below; see, e.g.,~\cite{newman2018networks,jafarpour2018synchronization,hartmann2024synchronized} for further details. 
The weighted graph Laplacian associated with the coupling strengths is
   $ \matr L = \matr E \matr K \matr E^\top $, 
where $\matr K=\operatorname{diag}(K_1,\ldots,K_{\nedge})$, $\nedge = |\EE|$ is the number of edges, and $N=|\VV|$ is the number of nodes.
For a connected network, $\ker(\matr L)$ is spanned by the constant vector. Hence, for balanced injections $\sum_n p_n=0$, the DC approximation has the solution
 $   \vec\theta^{\rm DC} = \matr L^+\vec p$,
   $ \vec f^{\rm DC} = \matr K\matr E^\top\matr L^+\vec p $,
where ${}^+$ denotes the Moore--Penrose pseudoinverse.

The balance equation $\matr E\vec f=\vec p$ does not determine the edge flows uniquely on a meshed network. If $\vec f$ and $\vec f'$ satisfy the same balance equation, their difference lies in the cycle space
$
    \ker(\matr E).
$
Choosing a basis of $\nedge-\nn+1$ independent cycles, we collect the corresponding oriented edge vectors in the cycle-edge incidence matrix
$\matr C\in\mathbb{R}^{\nedge\times(\nedge-\nn+1)}$, such that
$
    \operatorname{range}(\matr C)=\ker(\matr E).
$
Thus, every flow satisfying the balance equation can be written as
   $ \vec f=\vec f^{\rm DC}+\matr C\vec\ell $, 
where $\vec\ell$ contains the loop-flow amplitudes. The projector onto the cycle space~\cite{jafarpour2018synchronization,hartmann2024synchronized},
$    \matr\Pi_l
    =
    \eye-\matr K\matr E^\top\matr L^+\matr E $
satisfies
    $\matr\Pi_l^2 = \matr\Pi_l$,
    $\matr E\matr\Pi_l =0$, and 
    $\matr K\matr\Pi_l^\top = \matr\Pi_l\matr K$, 
and is orthogonal with respect to the weighted inner product
    $\langle\vec\xi,\vec\zeta\rangle_K
    :=
    \vec\xi^\top\matr K^{-1}\vec\zeta$, with the norm 
    $\|\vec\xi\|_K^2
    :=
    \langle\vec\xi,\vec\xi\rangle_K $. 
These properties will allow us to isolate precisely the part of an approximation error that lies in the cycle space.

Finally, not every flow satisfying $\matr E\vec f=\vec p$ is compatible with a set of nodal phases. For flows with $|f_e|<K_e$, phase compatibility requires~\cite{manik2017cycle,jafarpour2022flow}
\begin{align}
    \matr C^\top
    \arcsin\!\left(\matr K^{-1}\vec f\right)
    =2\pi \, \vec z,
    \qquad
    \vec z\in\mathbb Z^{\nedge-\nn+1}.
    \label{eq:winding-condition}
\end{align}
The integer vector $\vec z$ is the winding vector. Physically, $z_\alpha$ counts the net phase winding around cycle $\alpha$: traversing a closed loop, the phase differences must add up to an integer multiple of $2\pi$.

Different winding vectors can therefore correspond to distinct phase-locked states with the same nodal injections. Figure~\ref{fig:Multistability} illustrates this phenomenon for a test network with 14 nodes~\cite{matpower2011,pandapower2018}, where two stable solutions coexist and differ by a loop flow on the central cycle. In this article, we focus on the physically most relevant branch $\vec z=\vec 0$.

\begin{figure}[tb]
\centering
\includegraphics[width=.6\columnwidth]{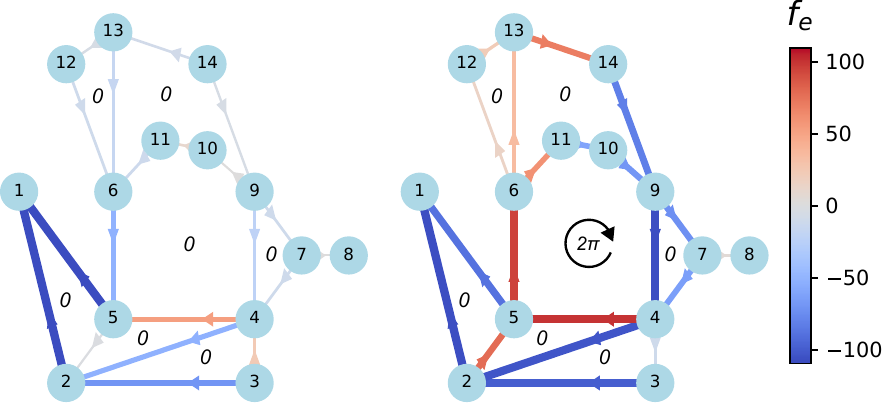}
\caption{Two coexisting phase-locked states in a modified IEEE \texttt{case14} network. Edge widths and colors indicate $|f_e|$, arrows show the flow direction, and the numbers inside the faces indicate the corresponding cycle sums
$\matr C^\top\arcsin(\matr K^{-1}\vec f)$. The solutions differ in the winding number of the central cycle. All edges are assumed to be lossless and have the same $K_e = 110.14$ MW.
}
\label{fig:Multistability}
\end{figure}

\section{Convex formulation of nonlinear network equilibria}
\label{sec:optimization}

We now return to the nonlinear fixed-point equations \eqref{eq:realpower-intro} and ask when they admit a phase-cohesive solution satisfying $|\theta_n-\theta_m|\leq\gamma<\pi/2$ on every edge. Our approach builds on the convex flow formulation introduced in Ref.~\cite{hartmann2024synchronized}. There, the nonlinear fixed-point problem was reformulated as a convex optimization problem and used to derive global bounds on the error of the DC approximation. Here, we extend this formulation to obtain sharper edge-wise error bounds and, ultimately, explicit existence and nonexistence certificates for phase-cohesive states.

For this purpose, consider the optimization problem
\begin{align}
    \label{opt:realpower1}
    & \min_{\vec f} \OO(\vec f)
    \\
    & \mathrm{s.t.}\quad
    \vec p = \matr E\vec f ,
    \label{opt:affine}
\end{align}
with the separable objective
\begin{align}
    \OO(\vec f) &= \sum_{e\in\EE}\OO_e(f_e),
    \label{eq:objective-fun}\\
    \OO_e(f_e)
    &=
    \tilde f_e
    \arcsin\!\left(\frac{\tilde f_e}{K_e}\right)
    +\sqrt{K_e^2-\tilde f_e^2}
    +\pi n_e(f_e-n_eK_e),
    \label{opt:objfunc}
\end{align}
where
   $ \tilde f_e = f_e-2n_eK_e$, and
   $ n_e = \left\lfloor\frac{f_e/K_e+1}{2}\right\rfloor .$ 
The functions $\OO_e$ are continuously differentiable and strictly
convex, and hence the optimization problem has a unique minimizer.
The periodic continuation in Eq.~\eqref{opt:objfunc} extends the
objective introduced in Ref.~\cite{hartmann2024synchronized} beyond the
principal interval $|f_e|<K_e$.

The relation between the minimizer and the nonlinear fixed point is
summarized by the following result.

\begin{thm}
\label{thm:realpower1}
Let $\gamma<\pi/2$. The following statements are equivalent:
\begin{enumerate}
    \item The vector $\vec\theta$ solves Eq.~\eqref{eq:realpower-intro} with winding vector
    $\vec z=\vec 0$ and satisfies
    $|\theta_n-\theta_m|\leq\gamma$ for all $(n,m)\in\EE$.
    \item The associated flow
      $  \vec f=\matr K\sin(\matr E^\top\vec\theta)$ 
    is the unique minimizer of Eq.~\eqref{opt:realpower1} and satisfies
      $  |f_e| \leq K_e\sin\gamma$
        for all $e \in \EE$.
\end{enumerate}
\end{thm}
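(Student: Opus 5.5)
The plan is to write down the first-order optimality (KKT) conditions of the convex program \eqref{opt:realpower1}--\eqref{opt:affine} and then compute $\OO_e'$ explicitly.

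\textbf{Step 1: the derivative of $\OO_e$.} On the principal branch $|f_e|<K_e$ we have $n_e=0$. Differentiating,
\begin{align*}
\frac{d}{df}\Big[f\arcsin(f/K)+\sqrt{K^2-f^2}\Big] = \arcsin(f/K) + \frac{f}{\sqrt{K^2-f^2}}-\frac{f}{\sqrt{K^2-f^2}} = \arcsin(f/K).
\end{align*}
So $\OO_e'(f_e)=\arcsin(f_e/K_e)$ there. On the continued branches, the same computation in the variable $\tilde f_e$ gives $\OO_e'(f_e)=\arcsin(\tilde f_e/K_e)+\pi n_e$. This function is continuous and strictly increasing in $f_e$, with value $\pi n_e \pm \pi/2$ at the branch endpoints. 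This is exactly the continuity and strict convexity claimed in the text, which we may take as given.

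\textbf{Step 2: Lagrangian and optimality.} Because the constraint is affine and the objective is convex and $C^1$, a feasible $\vec f$ is the (unique) minimizer if and only if there is a multiplier $\vec\theta\in\mathbb R^N$ with
\begin{align*}
\nabla\OO(\vec f)=\matr E^\top\vec\theta, \qquad \matr E\vec f=\vec p .
\end{align*}
Equivalently, $\nabla\OO(\vec f)$ is orthogonal to $\ker\matr E=\operatorname{range}\matr C$, i.e.\ $\matr C^\top\nabla\OO(\vec f)=0$. Uniqueness of the minimizer follows from strict convexity.

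\textbf{Step 3: (1)$\Rightarrow$(2).} Let $\vec\theta$ solve \eqref{eq:realpower-intro} with $|(\matr E^\top\vec\theta)_e|\le\gamma<\pi/2$, and set $\vec f=\matr K\sin(\matr E^\top\vec\theta)$. Then $\matr E\vec f=\vec p$, and $|f_e|\le K_e\sin\gamma<K_e$, so every edge lies on the principal branch. Since $\arcsin\sin x = x$ for $|x|<\pi/2$, we get $\nabla\OO(\vec f)=\arcsin(\matr K^{-1}\vec f)=\matr E^\top\vec\theta$. This is precisely the KKT condition, so $\vec f$ is the unique minimizer. The condition $\vec z=0$ holds automatically, because $\matr C^\top\matr E^\top=0$.

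\textbf{Step 4: (2)$\Rightarrow$(1).} Suppose the minimizer $\vec f$ satisfies $|f_e|\le K_e\sin\gamma$. Then again $n_e=0$ on every edge, so KKT gives a $\vec\theta$ with $\arcsin(f_e/K_e)=(\matr E^\top\vec\theta)_e$. This yields $|\theta_n-\theta_m|\le\gamma$ and $K_e\sin(\theta_n-\theta_m)=f_e$. Substituting into $\matr E\vec f=\vec p$ gives \eqref{eq:realpower-intro}. Finally, $\matr C^\top\arcsin(\matr K^{-1}\vec f)=\matr C^\top\matr E^\top\vec\theta=0$, so the winding vector is $\vec z=\vec 0$.

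\textbf{Main obstacle.} The main point needing care is the reading of statement~2. There, $\vec f$ is defined from the given $\vec\theta$, so in the reverse direction we must identify the $\vec\theta$ of statement~1 with the KKT multiplier. Since $\ker\matr E^\top$ is spanned by the constant vector, this multiplier is determined up to a constant shift, which is harmless. The other delicate point is checking that the periodic continuation does not create spurious stationary points. This is ruled out by strict monotonicity of $\OO_e'$ together with global uniqueness of the minimizer.
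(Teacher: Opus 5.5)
Your argument follows the paper's proof: the stationarity condition $\arcsin(f_e/K_e)=(\matr E^\top\vec\lambda)_e$, identifying the multiplier with the phases, and monotonicity of the sine. You write it out more completely than the paper does. You make explicit that the KKT conditions are sufficient by convexity, which is what actually gives (1)$\Rightarrow$(2). You also obtain $\vec z=\vec 0$ from $\matr C^\top\matr E^\top=0$ instead of deferring to the earlier reference. Your remark that $\vec z=\vec 0$ is then automatic is right for real-valued phase differences. If differences are read modulo $2\pi$, as Fig.~1 suggests, $\arcsin(\matr K^{-1}\vec f)$ is the wrapped difference vector. In that case the hypothesis $\vec z=\vec 0$ is exactly what places it in $\ker\matr C^\top=\operatorname{range}\matr E^\top$, and your Step~2 already covers this.

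One claim in your final paragraph is wrong, however. The difficulty with statement~2 is not that the multiplier is only fixed up to a constant shift. The real problem is that a $\vec\theta$ with $\matr K\sin(\matr E^\top\vec\theta)=\vec f^*$ need not be a multiplier at all, because the sine is not injective.

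Counterexample: take two nodes joined by one edge, $\vec p=\vec 0$, and $\vec\theta=(\pi,0)$. Then $\vec f=\vec 0$ is the only feasible point, hence the unique minimizer, and it satisfies $|f_e|\le K_e\sin\gamma$. Yet $|\theta_1-\theta_2|=\pi>\gamma$. So (2)$\Rightarrow$(1) fails for the given $\vec\theta$.

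What your Step~4 actually proves, correctly, is that the KKT multiplier $\vec\theta'$, which generates the same flow, satisfies (1). You can fix the statement in either of two ways:
\begin{itemize}
\item State the reverse direction as existence: there is a phase-cohesive $\vec\theta$ whose flow is the minimizer.
\item Add $|\theta_n-\theta_m|<\pi/2$ to statement~2. Then $\matr E^\top\vec\theta=\arcsin(\matr K^{-1}\vec f)=\matr E^\top\vec\theta'$, and the identification becomes legitimate.
\end{itemize}
The paper's proof passes over the same point. After identifying $\vec\lambda$ with $\vec\theta$, it asserts $|\theta_n-\theta_m|\le\gamma\Leftrightarrow|f_e|/K_e\le\sin\gamma$, which presupposes that the phase differences already lie in $[-\pi/2,\pi/2]$.
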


\begin{proof}
For $|f_e|<K_e$, Eq.~\eqref{opt:objfunc} reduces to the objective introduced in Ref.~\cite{hartmann2024synchronized}, whose stationarity condition is $ \arcsin(f_e/K_e)     =(\matr E^\top\vec\lambda)_e$ where $\vec\lambda$ is a vector of Lagrange multipliers. Hence,     $\vec f     =     \matr K\sin(\matr E^\top\vec\lambda)$,
and substitution into the constraint $\vec p=\matr E\vec f$ reproduces Eq.~\eqref{eq:realpower-intro} upon identifying $\vec\lambda$ with $\vec\theta$. The minimizer belongs to the branch $\vec z=\vec 0$; see Ref.~\cite{hartmann2024synchronized} for the detailed argument.
Since $\gamma<\pi/2$, the sine is strictly monotone on
$[-\gamma,\gamma]$, and therefore
\begin{align}
    |\theta_n-\theta_m|\leq\gamma
    \quad\Longleftrightarrow\quad
    |f_e|/K_e\leq\sin\gamma .
\end{align}
\end{proof}

We restrict attention to the branch $\vec z=\vec 0$ throughout the remainder of the paper. Extensions 
to nonzero winding vectors are discussed in Ref.~\cite{hartmann2024synchronized}.

\section{A posteriori error bounds}

We now derive rigorous a posteriori bounds on the error of an arbitrary approximate flow $\vec f^\circ$ satisfying the balance equation $\matr E\vec f^\circ=\vec p$. The bounds are derived from the approximation itself, without solving the nonlinear problem. 
Our derivation extends the error analysis of Ref.~\cite{hartmann2024synchronized} in two respects: it applies to a general feasible approximation and, most importantly, yields edge-wise error bounds that we will use to certify phase-cohesiveness.

We denote the gradient and Hessian of the objective by
\begin{align}
    g_e(\vec f)
    &= \frac{\partial\OO}{\partial f_e}
     = \arcsin(f_e/K_e-2n_e)+\pi n_e ,
    \\
    H_{e,e'}(\vec f)
    &=\delta_{e,e'}
      \frac{1}{\sqrt{K_e^2-(f_e-2n_eK_e)^2}} .
\end{align}
The Hessian is diagonal and satisfies
  $  \matr H(\vec f)\succeq \matr K^{-1}.$

\begin{lem}[A posteriori error inequality]
\label{lem:error-inequality}
Let $\vec f^*$ be the unique minimizer of
Eq.~\eqref{opt:realpower1}, and let $\vec f^\circ$ be any feasible
approximation satisfying $\matr E\vec f^\circ=\vec p$. Then the approximation error $\vec\xi:=\vec f^\circ-\vec f^*$ satisfies
\begin{align}
    \vec\xi^\top\matr K^{-1}\vec\xi
    \le
    \vec g(\vec f^\circ)^\top\vec\xi .
    \label{eq:inequality-from-Taylor}
\end{align}
\end{lem}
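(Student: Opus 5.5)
We plan to combine a second-order Taylor expansion of the convex objective along the segment from $\vec f^*$ to $\vec f^\circ$ with the first-order optimality condition at the minimizer.

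\emph{Step 1 (Taylor expansion with integral remainder).} Since each $\OO_e$ is continuously differentiable and $g_e$ is absolutely continuous (piecewise smooth, with $H_{ee}\ge 1/K_e$ wherever it is defined), we write
\begin{align}
    \vec g(\vec f^\circ)-\vec g(\vec f^*)
    =\int_0^1 \matr H\bigl(\vec f^*+t\vec\xi\bigr)\,\vec\xi\,dt .
\end{align}
The relation is componentwise, because $\matr H$ is diagonal. The derivative $g_e'$ blows up only at the isolated points $f_e=(2n_e\pm1)K_e$, where $g_e$ is still continuous and monotone. Hence the fundamental theorem of calculus applies edge by edge, for example by using monotonicity of $g_e$ and Lebesgue integrability of $g_e'$. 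Contracting with $\vec\xi$ and using $\matr H\succeq\matr K^{-1}$ pointwise gives
\begin{align}
    \bigl(\vec g(\vec f^\circ)-\vec g(\vec f^*)\bigr)^\top\vec\xi
    \ \ge\ \vec\xi^\top\matr K^{-1}\vec\xi .
\end{align}
Equivalently, and more elementarily, for each edge the map $x\mapsto g_e(x)-x/K_e$ is nondecreasing. This follows because $g_e$ has derivative at least $1/K_e$ wherever it is differentiable and is continuous everywhere. Therefore $(g_e(a)-g_e(b))(a-b)\ge (a-b)^2/K_e$, and summing over edges yields the display.

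\emph{Step 2 (optimality at the minimizer).} Since $\vec f^*$ minimizes the convex differentiable objective $\OO$ over the affine set $\{\vec f:\matr E\vec f=\vec p\}$, the KKT conditions give $\vec g(\vec f^*)=\matr E^\top\vec\lambda$ for some multiplier $\vec\lambda$. This is the same stationarity condition used in the proof of Theorem~\ref{thm:realpower1}, now valid for the extended objective. Because both $\vec f^\circ$ and $\vec f^*$ are feasible, $\matr E\vec\xi=\vec 0$, so
\begin{align}
    \vec g(\vec f^*)^\top\vec\xi=\vec\lambda^\top\matr E\vec\xi=0 .
\end{align}
Substituting this into Step 1 gives Eq.~\eqref{eq:inequality-from-Taylor}.

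\emph{Main obstacle.} The only delicate point is justifying the strong-monotonicity estimate in Step 1 across the points where the Hessian diverges or where the branch index $n_e$ jumps. I would handle this by proving, for a single scalar edge function, that $g_e$ is continuous and strictly increasing. On each open interval $((2n-1)K_e,(2n+1)K_e)$ it equals $\arcsin(\cdot)+\pi n$, and at the interval endpoints the values match: $\pi/2+\pi n$ from the left equals $-\pi/2+\pi(n+1)$ from the right. Moreover $\arcsin'(u)=1/\sqrt{1-u^2}\ge1$ on each interval. The mean value inequality, applied piecewise and chained across the breakpoints, then gives $g_e(a)-g_e(b)\ge(a-b)/K_e$ for $a>b$, which is all that is needed.
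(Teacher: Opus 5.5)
Your proof is correct, and it takes a slightly different route from the paper's.

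\textbf{The paper's route.} The paper writes $\vec\xi=\matr C\vec\ell$ and expands the objective \emph{values} to second order in Lagrange form twice: once around $\vec f^*$, where the linear term vanishes, and once around $\vec f^\circ$. It bounds each remainder below by $\frac12\vec\xi^\top\matr K^{-1}\vec\xi$ using $\matr H\succeq\matr K^{-1}$, and adds the two inequalities so that $\OO(\vec f^\circ)$ and $\OO(\vec f^*)$ cancel.

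\textbf{Your route.} You work at the level of the gradient. You use strong monotonicity, $(\vec g(\vec f^\circ)-\vec g(\vec f^*))^\top\vec\xi\ge\vec\xi^\top\matr K^{-1}\vec\xi$, together with the first-order condition $\vec g(\vec f^*)^\top\vec\xi=0$. In substance these are the same estimate, since adding two strong-convexity inequalities is exactly how gradient monotonicity is derived. Your version, however, never uses function values or second derivatives along the segment.

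\textbf{Why this matters.} The Lagrange remainder with an intermediate-point Hessian assumes that $\OO$ is twice differentiable on the segment from $\vec f^*$ to $\vec f^\circ$. That fails as soon as some component crosses a point with $|f_e-2n_eK_e|=K_e$, where $H_{e,e}$ diverges. This is precisely the situation for which the periodic extension of $\OO_e$ was introduced, e.g.\ a DC flow with $|f^{\rm DC}_e|>K_e$. The paper's conclusion still holds there, but justifying it requires exactly the fact you prove: $x\mapsto g_e(x)-x/K_e$ is continuous and nondecreasing across the breakpoints.

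In return, the paper's route gives the intermediate function-value bound $\OO(\vec f^\circ)-\OO(\vec f^*)\ge\frac12\|\vec\xi\|_K^2$, which the lemma itself does not need.
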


\begin{proof}

Both $\vec f^\circ$ and $\vec f^*$ satisfy the constraint, hence $\matr E \vec\xi=\vec0$ and we can write $\vec\xi=\matr C \vec \ell$. This parametrizes the feasible set, so $\vec \ell=\vec0$ is an unconstrained minimizer of $\vec \ell \mapsto \OO(\vec f^*+\matr C \vec \ell)$ and the first-order derivative vanishes at this point. Taylor's theorem in Lagrange form then gives
$\OO(\vec f^\circ)=\OO(\vec f^*)+\frac{1}{2}\vec\ell^\top\tilde{\matr{H}}\vec\ell$,
where $\tilde{\matr{H}}=\matr{C}^\top\matr{H}\matr{C}$ is evaluated at an intermediate point. Using $\vec\ell^\top \tilde{\matr{H}}\vec\ell=\vec \xi^\top\matr{H}\vec \xi$ and $\matr{H} \succeq \matr{K}^{-1}$, we obtain $\OO(\vec f^\circ) \ge \OO(\vec f^*) + \frac{1}{2} \vec\xi^\top \matr{K}^{-1} \vec \xi$.
Expanding the same way around $\vec f^\circ$ yields $\OO(\vec f^*)\ge\OO(\vec f^\circ)-\vec g(\vec f^\circ)^\top\vec\xi + \frac{1}{2}\vec\xi^\top\matr{K}^{-1}\vec\xi$. Adding both inequalities gives Eq.~\eqref{eq:inequality-from-Taylor}.
\end{proof}

Lemma~\ref{lem:error-inequality} can then be used to derive a posterior error bounds, either for the norm of the error~\cite{hartmann2024synchronized,cui2026cyclesolvability}, or for individual edge flows.

\begin{cor}[Global error bound]
Let $\vec f^*$ be the unique minimizer of
Eq.~\eqref{opt:realpower1}, and let $\vec f^\circ$ be any feasible
approximation satisfying $\matr E\vec f^\circ=\vec p$. Then the approximation error $\vec\xi=\vec f^\circ-\vec f^*$ satisfies
\begin{align}
    \| \vec \xi \|_K^2  &\le  
    \| \matr \Pi_l\matr K \vec g(\vec f^\circ) \|_K^2 
\end{align}
\end{cor}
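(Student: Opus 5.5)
The plan is to start from Lemma~\ref{lem:error-inequality} and rewrite its right-hand side in the weighted inner product. Since $\matr K$ is diagonal, $\vec g(\vec f^\circ)^\top\vec\xi = (\matr K\vec g(\vec f^\circ))^\top \matr K^{-1}\vec\xi = \langle \matr K\vec g(\vec f^\circ),\vec\xi\rangle_K$. The left-hand side is exactly $\|\vec\xi\|_K^2$, so the lemma reads
\begin{align}
    \|\vec\xi\|_K^2 \le \langle \matr K\vec g(\vec f^\circ),\vec\xi\rangle_K .
\end{align}

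Next I will use that the error lies in the cycle space. Both $\vec f^\circ$ and $\vec f^*$ satisfy the balance equation, so $\matr E\vec\xi=\vec 0$. From $\matr\Pi_l=\eye-\matr K\matr E^\top\matr L^+\matr E$ this gives $\matr\Pi_l\vec\xi=\vec\xi$. The projector $\matr\Pi_l$ is self-adjoint with respect to $\langle\cdot,\cdot\rangle_K$, because $\matr K\matr\Pi_l^\top=\matr\Pi_l\matr K$ is equivalent to $\matr\Pi_l^\top\matr K^{-1}=\matr K^{-1}\matr\Pi_l$. Therefore
\begin{align}
    \langle \matr K\vec g,\vec\xi\rangle_K = \langle \matr K\vec g,\matr\Pi_l\vec\xi\rangle_K = \langle \matr\Pi_l\matr K\vec g,\vec\xi\rangle_K .
\end{align}
The self-adjointness is the only step that needs care. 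Written out, it is the chain $\vec a^\top\matr K^{-1}\matr\Pi_l\vec\xi=\vec a^\top\matr\Pi_l^\top\matr K^{-1}\vec\xi=(\matr\Pi_l\vec a)^\top\matr K^{-1}\vec\xi$, which holds for any vector $\vec a$.

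Finally, Cauchy--Schwarz in the $K$-inner product gives $\|\vec\xi\|_K^2\le \|\matr\Pi_l\matr K\vec g(\vec f^\circ)\|_K\,\|\vec\xi\|_K$. If $\vec\xi=\vec 0$ the claim is trivial. Otherwise I divide by $\|\vec\xi\|_K$ and square, which yields $\|\vec\xi\|_K^2\le\|\matr\Pi_l\matr K\vec g(\vec f^\circ)\|_K^2$ as stated.
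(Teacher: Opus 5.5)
Your proposal is correct and follows essentially the same route as the paper: use $\matr\Pi_l\vec\xi=\vec\xi$ and $\matr K\matr\Pi_l^\top=\matr\Pi_l\matr K$ to rewrite $\vec g^\top\vec\xi=\langle\matr\Pi_l\matr K\vec g,\vec\xi\rangle_K$, insert this into Lemma~\ref{lem:error-inequality}, and apply Cauchy--Schwarz. You also write out the self-adjointness chain and the case $\vec\xi=\vec 0$ with the division and squaring, which the paper leaves implicit.
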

\begin{proof}
 Since $\vec\xi=\matr\Pi_l\vec\xi$ and $\matr K\matr\Pi_l^\top=\matr\Pi_l\matr K$, we have
$
    \vec g^\top\vec\xi
    =
    \left\langle
       \vec\xi,\matr\Pi_l\matr K\vec g
    \right\rangle_K .
$ 
Inserting in Eq.~\eqref{eq:inequality-from-Taylor} and applying the Cauchy--Schwarz inequality then proves the result.
\end{proof}

\begin{thm}[Edge-wise error bounds]
\label{thm:xi-a}
Let $\vec f^*$ be the unique solution of the optimization problem \eqref{opt:realpower1}, and let $\vec f^\circ$ be an arbitrary feasible approximation satisfying $\matr E\vec f^\circ=\vec p$ and $\vec \xi = \vec f^\circ-\vec f^*$ the approximation error. Then, for every edge $e\in\EE$,
\begin{align}
    -\chi_e
    \le
    \xi_e-\kappa_e
    \le
    \chi_e ,
    \label{eq:individual-error}
\end{align}
where, for every edge that is not a bridge,
\begin{align}
    \kappa_e
    =
    \frac{1}{2}
    \vec w_e^\top
    \matr \Pi_l\matr K
    \vec g(\vec f^\circ), \qquad
    \chi_e
    =
    \frac{1}{2}
    \sqrt{
        K_e\Pi_{e,e}\,
        \vec g(\vec f^\circ)^\top
        \matr \Pi_l\matr K
        \vec g(\vec f^\circ)
    } .
    \label{eq:deltaxi}
\end{align}
Here, $\vec w_e$ denotes the $e$-th standard basis vector,
$(\vec w_e)_{e'}=\delta_{e,e'}$, and $\Pi_{e,e}$ denotes the $e$-th
diagonal element of $\matr\Pi_l$. For every bridge,
$\kappa_e=\chi_e=0$.
\end{thm}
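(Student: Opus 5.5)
The plan is to read Lemma~\ref{lem:error-inequality} geometrically: it confines $\vec\xi$ to a ball in the $K$-norm, intersected with the cycle space. I would then maximize and minimize the linear functional $\xi_e=\vec w_e^\top\vec\xi$ over that set using Cauchy--Schwarz.

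\textbf{Step 1: a ball containing $\vec\xi$.} Define $\vec h:=\matr\Pi_l\matr K\vec g(\vec f^\circ)$. As in the proof of the global error bound, $\matr E\vec\xi=\vec 0$ gives $\vec\xi=\matr\Pi_l\vec\xi$. Together with $\matr K\matr\Pi_l^\top=\matr\Pi_l\matr K$ this yields $\vec g^\top\vec\xi=\langle\vec\xi,\vec h\rangle_K$. Eq.~\eqref{eq:inequality-from-Taylor} then reads $\|\vec\xi\|_K^2\le\langle\vec\xi,\vec h\rangle_K$. Completing the square gives
\begin{align}
  \left\|\vec\xi-\tfrac12\vec h\right\|_K^2\le\tfrac14\|\vec h\|_K^2 .
\end{align}
So $\vec\xi$ lies in the $K$-ball centred at $\vec h/2$ with radius $\|\vec h\|_K/2$. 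Since $\matr K\matr\Pi_l^\top\matr K^{-1}=\matr\Pi_l$ and $\matr\Pi_l^2=\matr\Pi_l$, the squared radius simplifies to
\begin{align}
  \|\vec h\|_K^2=\vec g^\top\matr K\matr\Pi_l^\top\matr K^{-1}\matr\Pi_l\matr K\vec g=\vec g^\top\matr\Pi_l\matr K\vec g .
\end{align}

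\textbf{Step 2: edge-wise projection.} Set $\vec\eta:=\vec\xi-\vec h/2$. Both $\vec\xi$ and $\vec h$ lie in the range of $\matr\Pi_l$, so $\vec\eta=\matr\Pi_l\vec\eta$. Hence
\begin{align}
  \xi_e-\tfrac12 h_e=\vec w_e^\top\matr\Pi_l\vec\eta=\langle\matr K\matr\Pi_l^\top\vec w_e,\vec\eta\rangle_K=\langle\matr\Pi_l\matr K\vec w_e,\vec\eta\rangle_K .
\end{align}
Here $\tfrac12 h_e$ is exactly $\kappa_e$. Cauchy--Schwarz gives $|\xi_e-\kappa_e|\le\|\matr\Pi_l\matr K\vec w_e\|_K\,\|\vec\eta\|_K$. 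The same identities as in Step 1 give
\begin{align}
  \|\matr\Pi_l\matr K\vec w_e\|_K^2=\vec w_e^\top\matr\Pi_l\matr K\vec w_e=K_e\Pi_{e,e}.
\end{align}
Combining this with $\|\vec\eta\|_K\le\tfrac12\sqrt{\vec g^\top\matr\Pi_l\matr K\vec g}$ yields exactly $\chi_e$ from Eq.~\eqref{eq:deltaxi}.

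\textbf{Step 3: bridges.} If $e$ is a bridge, no cycle contains it. Every vector in $\ker(\matr E)=\operatorname{range}(\matr\Pi_l)$ therefore vanishes on $e$. This gives $\xi_e=0$ and $h_e=0$, and $\Pi_{e,e}=0$ because $\matr\Pi_l\vec w_e$ lies in the cycle space and so has zero $e$-th component. Thus $\kappa_e=\chi_e=0$ and the bound holds trivially, consistent with the general formula.

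The main obstacle I expect is the bookkeeping with the non-symmetric projector. One must check that $\matr\Pi_l$ is self-adjoint in $\langle\cdot,\cdot\rangle_K$, which is the stated relation $\matr K\matr\Pi_l^\top=\matr\Pi_l\matr K$, so that both norms collapse to the stated quadratic forms. The geometric step itself is routine.
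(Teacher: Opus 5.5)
Your proof is correct, but it takes a different route from the paper.

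The paper fixes the value $\xi_e$ and uses Lagrange multipliers to compute the exact constrained minimum $R_e(\xi_e)$ of $\vec\zeta^\top\matr K^{-1}\vec\zeta-\vec g^\top\vec\zeta$ over circulations $\vec\zeta\in\ker(\matr E)$ with $\zeta_e=\xi_e$. It then reads off the interval from the requirement $R_e(\xi_e)\le 0$, which reduces to $(\xi_e-\kappa_e)^2\le\chi_e^2$. Solving for the multiplier $\mu$ requires dividing by $K_e\Pi_{e,e}$, which is why the paper must treat bridges separately.

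You instead complete the square to place $\vec\xi$ in a $K$-ball centred at $\tfrac12\matr\Pi_l\matr K\vec g$ inside the cycle space. You then bound the linear functional $\xi_e$ by Cauchy--Schwarz, using its $K$-representer $\matr\Pi_l\matr K\vec w_e$.

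Your argument buys several things:
\begin{itemize}
\item It is shorter and involves no multipliers or divisions.
\item It handles bridges automatically, since the formulas give zero there; your Step 3 is only a consistency check.
\item It shows that the global error bound and the edge-wise bounds come from the same ball. Indeed, $\|\vec\xi\|_K\le\|\matr\Pi_l\matr K\vec g\|_K$ follows from your ball by the triangle inequality.
\end{itemize}

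The paper's computation, in turn, gives the value function $R_e$ explicitly. It therefore shows directly that the interval is exactly the set of values of $\xi_e$ compatible with Lemma~\ref{lem:error-inequality}, i.e., the bound cannot be improved using that inequality alone. In your framework the same sharpness follows from the equality case of Cauchy--Schwarz, because the representer $\matr\Pi_l\matr K\vec w_e$ itself lies in the cycle space. You would need to add that remark if you wanted to claim tightness, although the theorem as stated does not require it.
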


\begin{proof}
If $e$ is a bridge, every cycle flow vanishes on $e$. Since
$\vec\xi\in\ker(\matr E)$, it follows that $\xi_e=0$, and hence
$\kappa_e=\chi_e=0$. For all other edges, Lemma~\ref{lem:error-inequality} gives
\begin{align}
    \vec\xi^\top\matr K^{-1}\vec\xi
    -
    \vec g^\top\vec\xi
    \le 0,
    \label{eq:inequality2}
\end{align}
where we use the shorthand
$\vec g=\vec g(\vec f^\circ)$.
For a prescribed value of $\xi_e$, consider the optimization problem
\begin{align}
    R_e(\xi_e)
    ={}&
    \min_{\vec\zeta}
    \left[
        \vec\zeta^\top\matr K^{-1}\vec\zeta
        -
        \vec g^\top\vec\zeta
    \right]
    \nonumber\\
    &\mathrm{s.t.}\qquad
    \matr E\vec\zeta=\vec0,
    \qquad
    \zeta_e=\xi_e .
    \label{eq:optimization-R}
\end{align}
A value $\xi_e$ is compatible with Eq.~\eqref{eq:inequality2} only if
$R_e(\xi_e)\le0$.

Introducing Lagrange multipliers $\vec\lambda$ and $\mu$, the
Lagrangian reads
\begin{align}
    \lagrangian(\vec\zeta,\vec\lambda,\mu)
    ={}&
    \vec\zeta^\top\matr K^{-1}\vec\zeta
    -
    \vec g^\top\vec\zeta
    +
    \vec\lambda^\top\matr E\vec\zeta
    +
    \mu(\zeta_e-\xi_e).
\end{align}
Stationarity with respect to $\vec\zeta$ yields
  $  \vec\zeta^*
    =
    \frac{1}{2}\matr K
    \left(
        \vec g
        -
        \matr E^\top\vec\lambda
        -
        \mu\vec w_e
    \right).$ 
Imposing the constraint
$\matr E\vec\zeta^*=\vec0$ gives
 $   \vec\lambda
    =
    \matr L^+
    \matr E\matr K
    \left(
        \vec g-\mu\vec w_e
    \right).$ 
Substituting $\vec \lambda$ into
$\vec\zeta^*$, and using
$
    \matr\Pi_l
    =
    \eye
    -
    \matr K\matr E^\top
    \matr L^+\matr E,
$
we obtain
$
    \vec\zeta^*
    =
    \frac{1}{2}
    \matr\Pi_l\matr K
    \left(
        \vec g-\mu\vec w_e
    \right).
    \label{eq:zetamin2}
    $ 
The constraint $\zeta_e^*=\xi_e$ then yields
 $   \mu
    =
    \frac{
        \vec w_e^\top
        \matr\Pi_l\matr K\vec g
        -
        2\xi_e
    }{
        K_e\Pi_{e,e}
    } .$ 

Using the projector identities
$\matr\Pi_l^2=\matr\Pi_l$ and
$\matr K\matr\Pi_l^\top=\matr\Pi_l\matr K$,
the condition $R_e(\xi_e)\le0$ reduces to
  $  -
    \vec g^\top
    \matr\Pi_l\matr K
    \vec g
    +
    \mu^2 K_e\Pi_{e,e}
    \le0 .$ 
Substituting $\mu$ gives
 $   \left(
        \xi_e
        -
        \frac{1}{2}
        \vec w_e^\top
        \matr\Pi_l\matr K
        \vec g
    \right)^2
    \le
    \frac{1}{4}
    K_e\Pi_{e,e}\,
    \vec g^\top
    \matr\Pi_l\matr K
    \vec g .$ 
With the definitions~\eqref{eq:deltaxi}, this is equivalent to
Eq.~\eqref{eq:individual-error}.
\end{proof}

Theorem~\ref{thm:xi-a} provides an edge-wise region of trust around
the approximate flow. Since
$\vec f^*=\vec f^\circ-\vec\xi$, Eq.~\eqref{eq:individual-error}
implies
$    f_e^*
    \in
    \left[
        f_e^\circ-\kappa_e-\chi_e,\,
        f_e^\circ-\kappa_e+\chi_e
    \right].$ 
Importantly, the center $\kappa_e$ and radius $\chi_e$ of this interval depend only on the approximation $\vec f^\circ$, the coupling strengths, and the network topology. The exact nonlinear solution $\vec f^*$ is \emph{not} required.

\section{Sufficient and necessary conditions for phase cohesiveness from the DC approximation}
\label{sec:existence}

We now focus on the DC approximation $\vec f^{\rm DC}$. 
Theorem~\ref{thm:xi-a} guarantees that the nonlinear flow $\vec f^*$ lies in the region of trust
$    f_e^*
    \in
    \left[
        f_e^{\rm DC}-\kappa_e-\chi_e,\,
        f_e^{\rm DC}-\kappa_e+\chi_e
    \right].$
A phase-cohesive solution with maximum phase difference $\gamma<\pi/2$
must satisfy
   $ |f_e^*| \le K_e\sin\gamma$, for every $e \in \EE$. 
Comparing these two intervals directly yields sufficient and necessary conditions.

\begin{cor}[Sufficient condition]
\label{cor:psi-lin-psi-rp}
Let $\vec f^{\rm DC}$ denote the DC approximation and let
$\gamma<\pi/2$. If
   $ |f_e^{\rm DC}-\kappa_e|+\chi_e
    \le K_e\sin\gamma$ for every $e \in \EE$,
then the nonlinear fixed-point equations~\eqref{eq:realpower-intro}
admit a solution with $\vec z=\vec0$ satisfying
  $  |\theta_n-\theta_m|\le\gamma$ for every $(n, m) \in \EE$. 
\end{cor}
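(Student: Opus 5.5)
We will combine the region of trust from Theorem~\ref{thm:xi-a} with the characterization of phase-cohesive fixed points in Theorem~\ref{thm:realpower1}. The DC flow satisfies $\matr E\vec f^{\rm DC}=\vec p$, so it is an admissible choice $\vec f^\circ=\vec f^{\rm DC}$ in Theorem~\ref{thm:xi-a}. Let $\vec f^*$ be the unique minimizer of problem~\eqref{opt:realpower1}. Existence and uniqueness follow from strict convexity and continuous differentiability of the periodically continued objective. We need coercivity along the affine feasible set; this follows from $\matr H\succeq\matr K^{-1}$, which makes $\OO$ grow at least quadratically. This is the reason the periodic continuation was introduced.

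The first step is to apply Theorem~\ref{thm:xi-a} with $\vec\xi=\vec f^{\rm DC}-\vec f^*$. This gives $f_e^*\in[f_e^{\rm DC}-\kappa_e-\chi_e,\,f_e^{\rm DC}-\kappa_e+\chi_e]$ for every edge. Hence
\begin{align}
|f_e^*|\le |f_e^{\rm DC}-\kappa_e|+\chi_e\le K_e\sin\gamma<K_e .
\end{align}
For bridges we have $\kappa_e=\chi_e=0$, and the same estimate holds trivially. In particular every component of $\vec f^*$ lies in the principal interval $|f_e^*|<K_e$. There $n_e=0$, and the objective coincides with the original one.

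The second step, which is the main point, is to turn $\vec f^*$ back into phases. On the principal interval the stationarity (KKT) conditions of the linearly constrained convex problem give a multiplier $\vec\lambda$ with $\arcsin(f_e^*/K_e)=(\matr E^\top\vec\lambda)_e$. The gradient formula $g_e=\arcsin(f_e/K_e)$ for $n_e=0$ applies here, and since the constraints are affine, no constraint qualification issue arises. Setting $\vec\theta=\vec\lambda$ yields $\vec f^*=\matr K\sin(\matr E^\top\vec\theta)$. The constraint $\matr E\vec f^*=\vec p$ is then exactly Eq.~\eqref{eq:realpower-intro}. Moreover $\matr C^\top\arcsin(\matr K^{-1}\vec f^*)=\matr C^\top\matr E^\top\vec\theta=\vec 0$, because $\matr E\matr C=0$. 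So the winding vector is $\vec z=\vec 0$.

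Finally, $|f_e^*|\le K_e\sin\gamma$ together with $\gamma<\pi/2$ gives $|(\matr E^\top\vec\theta)_e|=|\arcsin(f_e^*/K_e)|\le\gamma$. This is precisely the implication (2)$\Rightarrow$(1) of Theorem~\ref{thm:realpower1}, so we may simply invoke it. The only delicate point is confirming that the minimizer is attained and lies in the principal branch. The first half comes from coercivity and the second from the bound derived in the first step.
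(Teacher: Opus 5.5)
Your proposal is correct and is essentially the paper's argument. You take $\vec f^\circ=\vec f^{\rm DC}$ in Theorem~\ref{thm:xi-a} to get $|f_e^*|\le|f_e^{\rm DC}-\kappa_e|+\chi_e\le K_e\sin\gamma$, and then apply (2)$\Rightarrow$(1) of Theorem~\ref{thm:realpower1}; your extra steps (coercivity from $\matr H\succeq\matr K^{-1}$ to guarantee the minimizer exists, which strict convexity alone would not give, and the explicit KKT reconstruction of $\vec\theta$ with zero winding vector) only make explicit what the paper leaves to Theorem~\ref{thm:realpower1} and the surrounding text.
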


Corollary~\ref{cor:psi-lin-psi-rp} is closely related to the synchronization condition of Ref.~\cite{dorfler2013synchronization}. On trees, $\matr\Pi_l=\vec0$ and hence $\kappa_e=\chi_e=0$, so the conditions coincide. On meshed networks, the present condition explicitly accounts for loop-flow corrections.  See~\cite{Bacic2025phasecohesive} for a detailed discussion.

\begin{cor}[Necessary condition]
\label{cor:psi-lin-necessary}
Let $\vec f^{\rm DC}$ denote the DC approximation and let
$\gamma<\pi/2$. If a solution with $\vec z=\vec0$ satisfying
$|\theta_n-\theta_m|\le\gamma$ on all edges exists, then
 $   |f_e^{\rm DC}-\kappa_e|-\chi_e
    \le K_e\sin\gamma$ for every $e \in \EE$.
\end{cor}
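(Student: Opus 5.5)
The plan is to combine Theorem~\ref{thm:realpower1} with the edge-wise region of trust of Theorem~\ref{thm:xi-a}, applied to the particular feasible approximation $\vec f^\circ=\vec f^{\rm DC}$. The DC flow is admissible for Theorem~\ref{thm:xi-a} because $\matr E\vec f^{\rm DC}=\matr E\matr K\matr E^\top\matr L^+\vec p=\matr L\matr L^+\vec p=\vec p$. This last step uses that $\vec p$ is balanced, so $\vec p\perp\ker(\matr L)$ and $\matr L\matr L^+\vec p=\vec p$ on a connected network.

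\textbf{Step 1.} Assume a solution $\vec\theta$ with $\vec z=\vec0$ and $|\theta_n-\theta_m|\le\gamma$ on all edges exists. By Theorem~\ref{thm:realpower1}, the flow $\vec f=\matr K\sin(\matr E^\top\vec\theta)$ is the unique minimizer $\vec f^*$ of Eq.~\eqref{opt:realpower1}. It also satisfies $|f^*_e|\le K_e\sin\gamma$ for every $e\in\EE$.

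\textbf{Step 2.} Theorem~\ref{thm:xi-a} with $\vec\xi=\vec f^{\rm DC}-\vec f^*$ gives $|\xi_e-\kappa_e|\le\chi_e$. Equivalently,
\begin{align}
    f_e^*\in\left[f_e^{\rm DC}-\kappa_e-\chi_e,\;f_e^{\rm DC}-\kappa_e+\chi_e\right].
\end{align}
By the reverse triangle inequality,
\begin{align}
    |f_e^*|=\left|(f_e^{\rm DC}-\kappa_e)-(\xi_e-\kappa_e)\right|\ge |f_e^{\rm DC}-\kappa_e|-|\xi_e-\kappa_e|\ge |f_e^{\rm DC}-\kappa_e|-\chi_e .
\end{align}

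\textbf{Step 3.} Combining Steps 1 and 2 yields $|f_e^{\rm DC}-\kappa_e|-\chi_e\le |f_e^*|\le K_e\sin\gamma$ for every edge. For bridges, $\kappa_e=\chi_e=0$, and the inequality reduces to $|f_e^{\rm DC}|=|f_e^*|\le K_e\sin\gamma$, which is consistent.

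There is no serious obstacle here; the argument is a direct chaining of earlier results. The only point requiring care is identifying the phase-cohesive solution with the minimizer $\vec f^*$ that appears in Theorem~\ref{thm:xi-a}. That identification is exactly the direction $1\Rightarrow2$ of Theorem~\ref{thm:realpower1}, and it is essential that the winding vector is $\vec z=\vec0$. Solutions with nonzero winding are not the minimizer, and the region of trust would then say nothing about them.
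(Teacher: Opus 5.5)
Your proposal is correct and follows the paper's own route. You use the direction $1\Rightarrow2$ of Theorem~\ref{thm:realpower1} to identify the phase-cohesive flow with the minimizer $\vec f^*$ and to get $|f_e^*|\le K_e\sin\gamma$; you then apply Theorem~\ref{thm:xi-a} with $\vec f^\circ=\vec f^{\rm DC}$, and your reverse-triangle-inequality step is exactly the paper's observation that the region of trust must intersect $[-K_e\sin\gamma,K_e\sin\gamma]$. Your explicit check that $\matr E\vec f^{\rm DC}=\vec p$ fills in a detail the paper leaves implicit.
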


Equivalently, if this condition is violated on any edge, then no phase-cohesive solution with winding vector $\vec z=\vec0$ exists.

\section{Numerical validation}
\label{sec:numerical_val_1}

We validate the bounds on modified \textsc{Matpower} \texttt{case30} and \texttt{case118} test systems~\cite{matpower2011,pandapower2018}. Ohmic losses are neglected, and the effective coupling strengths are fixed at their reference values. To vary the overall network loading, all power injections are multiplied by a common factor $p_f$.

\begin{figure*}[tb]
\includegraphics[width=0.7\linewidth]{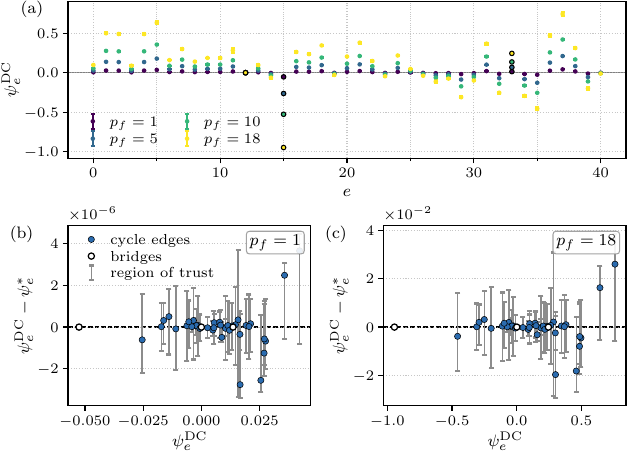}
\caption{(a) Region of trust $\psi^{\mathrm{DC}}_e - \kappa_e/K_e \pm \chi_e/K_e$ for every edge $e$ and four loading factors $p_f$ for the modified \texttt{case30}. The intervals are narrower than the symbol size. Bridges are marked with black outline. 
(b,c) Region of trust for $p_f=1$ and $p_f=18$. 
Symbols show the error of the DC approximation relative to the numerically exact solution of Eq.~\eqref{eq:fixed-point}, $\psi^{\mathrm{DC}}_e-\psi^*_e$, for cycle edges (filled) and bridges (white). 
Bars show the region of trust $[\kappa_e - \chi_e, \kappa_e + \chi_e]/K_e $. Every error lies within the region of trust.}
\label{fig:trustregion}
\end{figure*}

Figure~\ref{fig:trustregion} illustrates the edge-wise region of trust from Theorem~\ref{thm:xi-a} for \texttt{case30}, in terms of $\psi_e = f_e/K_e = \sin(\theta_n-\theta_m)$. For every edge, the exact nonlinear flow $f_e^*$ lies within
\begin{align}
    f_e^{\rm DC}-\kappa_e-\chi_e
    \le f_e^*
    \le
    f_e^{\rm DC}-\kappa_e+\chi_e .
\end{align}
Although the error increases with $p_f$, the intervals remain narrow even for strongly increased loading, showing that the DC approximation provides an accurate reference point for the subsequent analysis. 
To quantify the resulting bound on phase cohesiveness, we define
\begin{align}
    \gamma
    &=
    \max_{(n,m)\in\EE}
    |\theta_n^*-\theta_m^*|, \qquad \qquad
    \Psi
    :=
    \max_{e\in\EE}
    \frac{|f_e^{\rm DC}-\kappa_e|+\chi_e}{K_e}.
    \label{eq:Psi-def}
\end{align}
Theorem~\ref{thm:xi-a} directly implies  $\sin\gamma \le \Psi $ and thus $\Psi/\sin\gamma \ge 1$.

Figure~\ref{fig:phasecohesive} compares the exact value
$\sin\gamma$ with the bound $\Psi$. For the standard injection pattern,
the agreement is very close over a broad range of loading levels.
For \texttt{case30}, the bound is exact because the maximally loaded
edge is a bridge, for which $\kappa_e=\chi_e=0$. For
\texttt{case118}, the bound remains tight as long as the system stays
well inside the phase-cohesive regime.

\begin{figure*}[tb]
\centering
\includegraphics[width=0.75\linewidth]{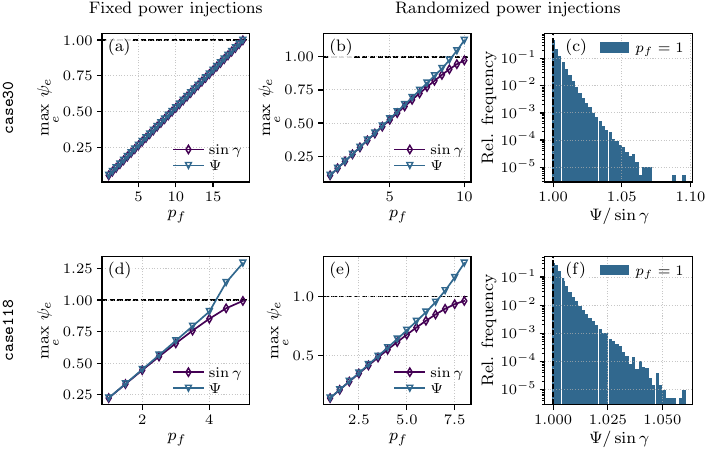}
\caption{Exact phase cohesiveness $\sin\gamma$ and its upper bound
$\Psi$ for \texttt{case30} (a--c) and \texttt{case118} (d--f). In panel (a), the maximum flow is over a bridge, hence $\sin \gamma$ and $\Psi$ coincide. 
(a,d) Standard power injections with increasing loading factor $p_f$. Dashed line marks $\gamma = \pi/2$. 
(b,e) Average over 200 randomized balanced injection patterns.
(c,f) Distribution of $\Psi/\sin\gamma$ for
$2\times10^5$ random balanced injections at $p_f=1$.}
\label{fig:phasecohesive}
\end{figure*}

We further test the bounds for randomized balanced injection vectors
$\vec p$, generated as in Ref.~\cite{hartmann2024synchronized}.
For the loading sweeps in Fig.~\ref{fig:phasecohesive}(b,e), we average
over 200 realizations that admit solutions throughout the investigated
range of $p_f$. Panels (c,f) show the ratio
$\Psi/\sin\gamma$ for $2\times10^5$ random realizations at $p_f=1$.
In both test systems, the distribution is strongly concentrated near
unity, demonstrating that the rigorous bound is close to the
actual maximum phase difference.

\section{Conclusion}

We have derived rigorous a posteriori bounds on approximation errors for nonlinear phase-locked states in complex networks, by building on the convex flow formulation of Ref.~\cite{hartmann2024synchronized}. The resulting edge-wise regions of trust quantify the deviation of an arbitrary feasible approximation from the exact nonlinear solution and explicitly account for the cycle structure of the network.

Applied to the DC approximation, these bounds yield simple sufficient and necessary conditions for the existence of phase-cohesive states with prescribed maximum phase differences. Numerical experiments on benchmark power-grid networks show that the bounds are typically tight over a broad range of operating conditions. The framework is formulated in terms of generic oscillator-network equilibria and therefore applies beyond power systems to other synchronization problems on complex networks.

\end{document}